\documentclass[conference,a4paper]{IEEEtran}

\addtolength{\topmargin}{9mm}

\usepackage[utf8]{inputenc} 
\usepackage[T1]{fontenc}
\usepackage{url}              
\usepackage{cite}             

\usepackage[cmex10]{amsmath}   

\usepackage{amssymb,amsfonts,amsthm}

\interdisplaylinepenalty=1000 
\usepackage{mleftright}       
\mleftright                   

\usepackage{graphicx}         
\usepackage{booktabs}        
\usepackage{bm}
\usepackage{textcomp}
\usepackage{comment}
\usepackage{subcaption}
\usepackage{xcolor}


                              
\def\IID{\mathop{\mathrm{i.i.d}}} 

\newtheorem{theorem}{\bfseries Theorem}
\newtheorem{lemma}{\bfseries Lemma}
\newtheorem{definition}{\bfseries Definition}
\newtheorem{corollary}{\bfseries Corollary}

\newtheorem{remark}{\bfseries Remark}

\newtheorem{example}{\bfseries Example}

\newtheorem*{assumptions*}{\bfseries Assumptions}

\newcommand{\RDF}[2]{{\mathcal{R}}_{#1}(#2)}%
\newcommand{\fRDF}[3]{{\mathcal{R}}_{#1,#2}(#3)}%
\newcommand{\iRDF}[2]{{{\mathcal I}}_{#1}(#2)}%
\newcommand{\ifRDF}[3]{{{\mathcal I}}_{#1,#2}(#3)}%

\newcommand{\RDFm}[2]{{\widehat{\mathcal{R}}}_{#1}(#2)}%
\newcommand{\fRDFm}[3]{{\widehat{\mathcal{R}}}_{#1,#2}(#3)}%
\newcommand{\iRDFm}[2]{{\widehat{{\mathcal I}}}_{#1}(#2)}%
\newcommand{\ifRDFm}[3]{{\widehat{{\mathcal I}}}_{#1,#2}(#3)}%

\hyphenation{op-tical net-works semi-conduc-tor}

\begin{document}

\title{Indirect Rate Distortion Functions with\\ $f$-Separable Distortion Criterion} 

\author{%
  \IEEEauthorblockN{Photios A. Stavrou^\dagger, Yanina Shkel and Marios Kountouris\dagger}
  \IEEEauthorblockA{%
    ~~~\\
    ~~~\\
    ~~~}
}

%
%
   
%
\author{%
   \IEEEauthorblockN{Photios A. Stavrou\IEEEauthorrefmark{1},
                     Yanina Shkel\IEEEauthorrefmark{2},
                    Marios Kountouris\IEEEauthorrefmark{1}}
  \IEEEauthorblockA{\IEEEauthorrefmark{1}                     Communication Systems Department, EURECOM, Sophia-Antipolis, France 
                    }
  \IEEEauthorblockA{\IEEEauthorrefmark{2}
                    School of Computer and Communication Sciences, EPFL, Lausanne, Switzerland,
                    }
\IEEEauthorblockA{\texttt{\{fotios.stavrou,marios.kountouris\}}@eurecom.fr,~\texttt{yanina.shkel}@epfl.ch}
\thanks{The work of P. Stavrou and M. Kountouris has received funding from the European Research Council (ERC) under the European Union’s Horizon 2020 Research and Innovation programme (Grant agreement No. 101003431).}
 }

\maketitle

\begin{abstract}
We consider a remote source coding problem subject to a {distortion function}. Contrary to the use of the classical separable distortion criterion, herein we consider the more general, $f$-separable distortion measure and study its implications on the characterization of the minimum achievable rates (also called $f$-separable indirect rate distortion function (iRDF)) under both excess and average distortion constraints. First, we provide a single-letter characterization of the optimal rates  subject to an excess distortion using properties of the $f$-separable distortion. Our main result is a single-letter characterization of the $f$-separable iRDF  subject to an average distortion constraint. As a consequence of the previous results, we also show a series of equalities that hold using either indirect or classical RDF under $f$-separable excess or average distortions. We corroborate our results with two application examples in which new closed-form solutions are derived, and based on these, we also recover known special cases. 
\end{abstract}


\section{Introduction}\label{sec:intro}

The mathematical analysis of the lossy source coding under a {fidelity criterion}, called rate distortion theory \cite{shannon:1959}, was {developed under the assumption} that an encoder observes an information source ${\bf x}$ with distribution $p(x)$ defined on the alphabet space ${\cal X}$, and the aim is for the decoder to reconstruct in a minimal end-to-end rate-constrained manner, its representation $\widehat{\bf x}$ defined on an alphabet $\widehat{\cal X}$ within a distortion measure $d:{\cal X}\times\widehat{\cal X}\mapsto[0,\infty)$. When the information source generates a sequence of $n$ realizations, the source sequence induces the distribution $p(x^n)$ on the Cartesian product alphabet space ${\cal X}^n$, with its reconstruction alphabet being $\widehat{\cal X}^n$. For the latter case, Shannon in \cite{shannon:1959} extended the single-letter expression of the distortion measure to the $n$-letter expression $d^n:{\cal X}^n\times\widehat{\cal X}^n\mapsto[0,\infty)$ by taking the arithmetic mean of single-letter distortions, i.e.,
\begin{align}
d^n(x^n,\widehat{x}^n)=\frac{1}{n}\sum_{i=0}^n d(x_i,\widehat{x}_i),\label{separable_dist}
\end{align}
which is often encountered as {\it separable}, {\it additive} or {\it per-letter} distortion measure.
\par A natural extension of the lossy source coding problem, called {\it indirect} or {\it remote} lossy source coding, was proposed almost fifteen year later in\cite{dobrushin:1962}. Therein the authors considered the case where the encoder observes a noisy version of the source ${\bf x}$, say ${\bf z}$, and the goal is to reconstruct $\widehat{\bf x}$ with minimal rates subject to an average distortion $d:{\cal X}\times\widehat{\cal X}\mapsto[0,\infty)$. A major result in \cite{dobrushin:1962} is that for stationary memoryless sources, the fundamental limit in the asymptotic regime corresponds to the classical lossy source coding problem with an amended average distortion constraint. Subsequently, this problem and some of its variants, e.g., non-asymptotic analysis, excess distortion measures, multi-terminal systems, were revisited by many researchers, see e.g., \cite{berger:1971,sakrison:1968,wolf-ziv:1970,witsenhausen:1980,kipnis:2015,Kostina:2016,yamamoto:1980,berger:1996,oohama:2012,eswaran:2019} and references therein.  
\par All the aforementioned efforts in \cite{berger:1971,sakrison:1968,wolf-ziv:1970,witsenhausen:1980,kipnis:2015,Kostina:2016,yamamoto:1980,berger:1996,oohama:2012,eswaran:2019}, consider separable distortion penalties. On one hand, the separability assumption is natural and quite appealing when it comes to the derivation of tractable characterizations of the fundamental trade-offs between the coding (or compressed) rate and its corresponding distortion. On the other hand, the separability assumption is very restrictive because it {only models distortion penalties that are {\it linear functions} of the single-letter distortion in the source reconstruction}. However, in real-world applications, distortion measures may be highly {\it non-linear}. To address this issue and inspired by \cite{shkel:2018}, here we consider a much broader class of distortion measures, namely, $f$-separable distortion measures.   
\par In this work, we derive the following new results: (i) a single-letter characterization of the minimal rates  subject to an excess distortion using properties of the $f$-separable distortion (see Lemma \ref{lemma:char_excess_dist}); (ii) a single-letter characterization of the $f$-separable iRDF (obtained for finite alphabets) subject to an average distortion constraint that is obtained under relatively mild regularity conditions and by making use of a strong converse theorem \cite{Kostina:2016} (see Theorem \ref{theorem:multi_char}); (iii) new series of equalities under $f$-separable excess or average distortion constraints using indirect or classical RDFs (see Corollary \ref{cor:char_excess_dist} and Theorem \ref{theorem:multi_char}); (iv) two application examples in which new analytical solutions are derived for various types of $f$-separable average distortions; we also explain how these analytical expressions recover known results as special cases (see Examples \ref{example:bms_nc}, \ref{example:bec}). 
It is worth mentioning that from (ii), we also derive the implicit solution of the optimal minimizer that achieves the characterization of the $f$-separable iRDF (see Corollary \ref{lemma:implicit_sol}). This result can be readily used to derive new Blahut-Arimoto type of algorithms \cite{arimoto:1972,blahut:1972} for a much richer class of distortion penalties.


\section{Problem Formulation}\label{sec:prob_form}

We consider a memoryless source described by the tuple $({\bf x}, {\bf z})$ with probability distribution ${p}(x,z)$ in the product alphabet space ${\cal X}\times{\cal Z}$. The remote information of the source is in ${\bf x}$ whereas ${\bf z}$ is the noisy observation at the encoder side. 
The goal is to study the remote source coding problem \cite{dobrushin:1962,wolf-ziv:1970,witsenhausen:1980} under an $f$-separable distortion measure. 

Formally, the system model (without the distortion penalties) is illustrated in Fig. \ref{fig:system_model} and can be interpreted as follows. An \emph{information source} is a sequence of $n$-length independent and identically distributed ($\IID$) RVs $({\bf x}^n,{\bf z}^n)$. The {\it encoder (E)} and the {\it decoder (D)}, are modeled by the mappings
\begin{align}
\begin{split}
&f^E: {\cal Z}^n\rightarrow{\cal W},~~g^D: {\cal W}\rightarrow\widehat{\cal X}^n
\end{split}\label{coding_functions}
\end{align}
where the index set ${\cal W}\in\{1,2,\ldots,M\}$. 
\begin{figure} [htp]
	\begin{center}
\includegraphics[width=\columnwidth]{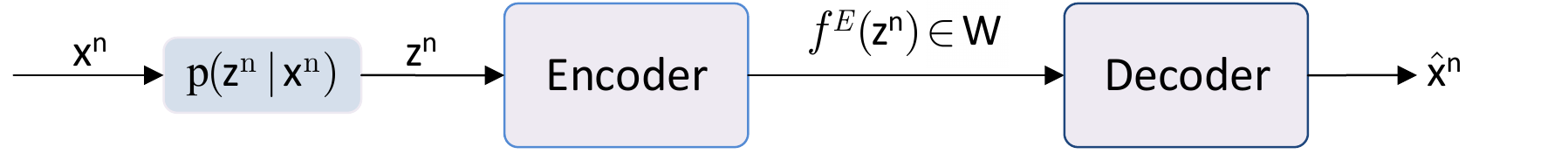}
	\end{center}
	\caption{System model.}
	\label{fig:system_model}
\end{figure}

We consider a {per-letter distortion measure} responsible to penalize the remote information source in Fig. \ref{fig:system_model} given by $d:{\cal X}\times\widehat{\cal X}\mapsto[0,\infty)$ and their corresponding $n$-letter expressions given by $d^n:{\cal X}^n\times\widehat{\cal X}^n\mapsto[0,\infty)$. {This setting has recently gained attention in the context of goal-oriented semantic communication \cite{Goldreich:ACM,stavrou:2023tcom}, where ${\bf x}$ can represent the semantic or intrinsic information of the source, which is not directly observable, whereas ${\bf z}$ is the noisy observation of the source at the encoder side. }

Next, we define the precise terminology of the noisy lossy source codes for the single-letter and the multi-letter case (without restricting to $\IID$ processes at this stage).

\begin{definition}\label{def:noisy_source_codes}(Noisy lossy source codes) Consider constants $\epsilon\in[0,1)$, $D\geq{0}$, and an integer $M$. \\ 
{\bf (1)} We say that a noisy lossy source-code $(f^E, g^D)$ is an $(M, D)$-noisy lossy source code on $({\cal X}, {\cal Z}, \widehat{\cal X}, d)$ such that ${\bf x}-{\bf z}-\widehat{\bf x}$, if ${\bf E}\left[d({\bf x},{\widehat{\bf x}})\right]\leq{D}$, where ${\widehat{\bf x}}=g^D(f^E({\bf z}))$.\\ 
{\bf (2)} We say that a noisy lossy source-code $(f^E, g^D)$ is an $(M, D, \epsilon)$-noisy lossy source code on $({\cal X}, {\cal Z}, \widehat{\cal X}, d)$ such that ${\bf x}-{\bf z}-\widehat{\bf x}$, if ${\bf P}\left[d({\bf x},{\widehat{\bf x}})>D\right]\leq{\epsilon}$ where ${\widehat{\bf x}}=g^D(f^E({\bf z}))$.\\
{\bf (3)} If $(f^E, g^D)$ is an $(M, D)$-noisy lossy source code on $({\cal X}^n, {\cal Z}^n, \widehat{\cal X}^n,  d^n)$ such that ${\bf x}^n-{\bf z}^n-\widehat{\bf x}^n$, we say that  $(f^E, g^D)$ is an $(n, M, D)$-noisy lossy source code.\\
{\bf (4)} If $(f^E, g^D)$ is an $(M, D, \epsilon)$-noisy lossy source code on $({\cal X}^n, {\cal Z}^n, \widehat{\cal X}^n, d^n)$ such that ${\bf x}^n-{\bf z}^n-\widehat{\bf x}^n$, we say that  $(f^E, g^D)$ is an $(n, M, D, \epsilon)$-noisy lossy source code.
\end{definition}
We remark the following special case of Definition \ref{def:noisy_source_codes}.
\begin{remark}(On Definition \ref{def:noisy_source_codes})
In our analysis, we will also consider as a special case the classical (noiseless) lossy source codes subject to similar single-letter and multi-letter distortion measures as in the case of noisy lossy source coding. This means that we will use special cases of Definition \ref{def:noisy_source_codes}. For example, for a noiseless lossy source code, Definition \ref{def:noisy_source_codes}, {\bf (1)}, will be modified as follows
\begin{itemize}
\item we say that a lossy source-code $(f^E, g^D)$ is an $(M, D)$-lossy source code on $({\cal X}, \widehat{\cal X}, d)$ if ${\bf E}\left[d({\bf x},{\widehat{\bf x}})\right]\leq{D}$, where ${\widehat{\bf x}}=g^D(f^E({\bf x}))$ (because ${\bf x}={\bf z}$).
\end{itemize} 
Definition \ref{def:noisy_source_codes}, {\bf (2)}-{\bf (4)}, are modified accordingly.
\end{remark}

Using \cite[Definition 1]{shkel:2018}, we consider an  \emph{f-separable distortion measure} associated with the remote information source of the setup in Fig. \ref{fig:system_model} defined as follows
\begin{align}
d_f^{n}({x}^n,\hat{x}^n)&\triangleq{f}^{-1}\left(\frac{1}{n}\sum_{i=1}^nf\left(d(x_i,\hat{x}_i)\right)\right)\label{noisy_dist_f_separable}
\end{align}
where $f(\cdot)$ is a continuous, increasing function on $[0,\infty)$.
\par In the sequel, we give the definitions of indirect and direct (or classical) RDFs under $f$-separable distortion measures. To do it, we need the following definition of achievability.
\begin{definition}\label{def:achievability} (Achievability) Suppose that a sequence of distortion measures $\{d^n:~n=1,2,\ldots\}$ on $({\cal X}^n,\widehat{\cal X}^n)$ is given, such that ${\bf x}^n-{\bf z}^n-\widehat{\bf x}^n$. Then, we define the following statements.\\
{\bf (1)} The rate distortion tuple $(R, D)$ is indirectly achievable if there exists a sequence $(n, M_n, D^n)$-noisy lossy source codes such that $\limsup_{n\rightarrow\infty}\frac{1}{n}\log{M_n} \leq{R}$, $\limsup_{n\rightarrow\infty}D^n\leq{D}$.\\
{\bf (2)} The rate distortion tuple $(R, D)$ is indirectly and excess distortion achievable if for any $\gamma>0$ there exists a sequence $(n, M_n, D+\gamma, \epsilon_n)$-noisy lossy source codes such that $\limsup_{n\rightarrow\infty}\frac{1}{n}\log{M_n} \leq{R}$, $\limsup_{n\rightarrow\infty}\epsilon_n=0$,  where $\epsilon_n$ denotes the decoding error probability, i.e., $\epsilon_n={\bf P}\left[{\bf x}^n\neq{g}^D(f^E({\bf z}^n))\right]$.
\end{definition}
If we assume sequences of noiseless lossy source codes, we say that a rate distortion tuple $(R,D)$ is directly (and excess distortion) achievable in analogous way to Definition \ref{def:achievability}, with ${\cal X}^n={\cal Z}^n$. This means that the sequence of distortion measures $\{d^n: n=1,2,\ldots\}$ can be defined either on $({\cal Z}^n,\widehat{\cal X}^n)$ or on $({\cal X}^n,\widehat{\cal X}^n)$.
\begin{definition}\label{def:iRDF}(iRDF) Given a single-letter distortion measure $d\colon {\cal X} \times \widehat{\cal X} \to [0, \infty)$ and a continuous, increasing function $f$ on $[0, \infty)$, let $\{d_f^{n}:~n=1,2,\ldots\}$ be a sequence of $f$-separable distortion measures.
Then,
\begin{align}
\ifRDF{f}{d}{D}=\inf\left\{R: (R, D)~\mbox{is indirectly achievable}\right\} \label{inf_ind_achievable_rates}
\end{align}
and
$\ifRDFm{f}{d}{D}=\inf\bigl\{R: (R, D)~\mbox{is indirectly and excess dist-}\\~
\mbox{ortion achievable}\bigr\}$.
If $f$ is the identity function, then we have a sequence of separable distortion measures; in this case we omit the subscript $f$ and write $\iRDF{d}{D}$ and $\iRDFm{d}{D}$.
\end{definition}

\begin{definition}\label{def:dRDF}(Direct RDF) Given a single-letter distortion measure $d \colon {\cal Z} \times \widehat{\cal X} \to [0, \infty)$ and a continuous, increasing function $f$ on $[0, \infty)$, let $\{d_f^{n}:~n=1,2,\ldots\}$ be a sequence of $f$-separable distortion measures.
Then,
\begin{align}
\fRDF{f}{d}{D}=\inf\left\{R: (R, D)~\mbox{is directly achievable}\right\} \label{inf_dir_achievable_rates}
\end{align}
and
$\fRDFm{f}{d}{D}=\inf\{\mbox{R: (R, D)~is directly and excess distortion}\\~\mbox{achievable}\}$.
If $f$ is the identity function, we omit the subscript $f$ and write $\RDF{d}{D}$ and $\RDFm{d}{D}$.
\end{definition}

We give the following remark for the previous two Definitions.
\begin{remark}\label{remark:2} (On Definitions \ref{def:iRDF}, \ref{def:dRDF})
In this work our goal is to characterize the $f$-separable iRDFs  $\ifRDF{d}{f}{D}$ and $\ifRDFm{d}{f}{D}$ for a given distortion measure $d(\cdot, \cdot)$ and a function $f(\cdot)$. In addition to the $f$-separable iRDFs, we consider the following three special cases: (1) separable RDF $\RDF{d}{D}$ and $\RDFm{d}{D}$,  (2) separable iRDFs $\iRDF{d}{D}$ and $\iRDFm{d}{D}$, and  (3) $f$-separable RDFs $\fRDF{d}{f}{D}$ and $\fRDFm{d}{f}{D}$.
To state our results, we compare these different classes of RDFs to each other. While the iRDFs is defined over some space $({\cal X}, {\cal Z}, \widehat{\cal X}, d)$, it is possible to generate modified direct RDFs from iRDFs in which case these are definite 
over the space $({\cal Z}, \widehat{\cal X}, {\tilde d})$, where $\tilde d \colon {\cal Z} \times \widehat{\cal X} \to [0,\infty)$ is an amended distortion measure. In general, the underlying space for the direct RDFs should be clear from context. For example, $\RDF{d}{D}$ refers to an RDF on $({\cal X}, \widehat{\cal X}, d)$, while $\RDF{\tilde d}{D}$ refers to an RDF on $({\cal Z}, \widehat{\cal X}, {\tilde d})$.
\end{remark}

\section{Prior Work} \label{sec:prior_work}

\par Next, we discuss more extensively some prior results that will be used in our main results.

\subsection{RDF under Average and Excess Constraints}
For $\IID$ sources with finite alphabets $({\cal X}, \widehat{\cal X})$ and bounded distortion measure $d$, the RDF is given by
\begin{align}
\RDF{d}{D} & =\inf_{\substack{q(\widehat{x}|x) \colon
{\bf E}\left[d({\bf x}, \widehat{\bf x})\right]\leq D
}} I({\bf x};\widehat{\bf x}).\nonumber
\end{align}
See~e.g., \cite[Theorem 10.2.1]{cover-thomas:2006} and ~\cite[Theorem 5.2.1]{han:2003}. Moreover, we know that for stationary ergodic sources with a bounded distortion measure,
\begin{align}
\RDF{d}{D} = \RDFm{d}{D}. \label{eq:rdf_eqv}
\end{align}
That is, the RDF is the same under average and excess distortion constraints~\cite[Theorem 5.9.1]{han:2003}. We also know that for stationary ergodic sources $\RDFm{d}{D}$ satisfies the so-called {\em strong converse} ~\cite{Kostina:2016,kieffer:1991}. Finally, the second order asymptotic expansion of $\RDFm{d}{D}$ is given as well, see e.g.,~\cite{Wang:2011,Kostina:2012}, but this type of analysis is beyond the scope of the present paper.

\subsection{iRDF}

For $\IID$ sources with finite alphabets $({\cal X}, {\cal Z}, \widehat{\cal X})$ and bounded distortion measure $d$, the iRDF is given by
\begin{align}
\iRDF{d}{D} &= \inf_{\substack{q(\widehat{x}|z) \colon\\
{\bf E}\left[{d}({\bf x}, \widehat{\bf x})\right]\leq D
}} I({\bf z};\widehat{\bf x})\nonumber\\
&\stackrel{(a)}=\inf_{\substack{q(\widehat{x}|z) \colon\\
{\bf E}\left[\tilde{d}({\bf z}, \widehat{\bf x})\right]\leq D
}} I({\bf z};\widehat{\bf x})\equiv\RDF{\tilde d}{D} \label{eq:irdf_main}
\end{align}
where $(a)$ follows from \cite{dobrushin:1962} (see also Remark \ref{remark:2}) and $\RDF{\tilde d}{D}$ is the direct RDF for $({\cal X}, {\cal Z})$ with the amended distortion given by $\tilde{d}({z}, \widehat{x})=\sum_{{\cal X}}p(x|z)d(x,\widehat{x})$. In other words, the indirect rate distortion problem reduces to a  direct rate distortion problem with a modified per-letter distortion measure~\cite{dobrushin:1962,wolf-ziv:1970,witsenhausen:1980,Kostina:2016}. Moreover, for $\IID$ sources, the iRDF is the same under average and excess distortion constraints
\begin{align}
\iRDF{d}{D} &= \iRDFm{d}{D} \label{eq:irdf_eqv}
\end{align}
and the strong converse also holds~\cite{Kostina:2016}. Finally, for this problem, the second-order asymptotic analysis has been addressed in~\cite{Kostina:2016} where it was shown that the equivalence between direct and indirect problems no longer holds in the second-order (dispersion) sense.

\subsection{f-Separable RDF}

Similar equivalence results hold for $f$-separable RDFs. Specifically, for $\IID$ sources
\begin{align}
\fRDF{f}{d}{D} &= \RDF{\bar d}{f(D)} =\inf_{\substack{q(\widehat{x}|x)\\
{\bf E}\left[{\bar d}({\bf x}, \widehat{\bf x})\right]\leq f(D)
}} I({\bf x};\widehat{\bf x})
\end{align}
where $\RDF{\bar d}{\cdot}$ is the separable RDF for $({\cal X}, {\cal \widehat{X}})$ with the amended distortion given by ${\bar d}(x, \widehat{x})=f(d(x, \widehat{x}))$, see~\cite{shkel:2018}. More generally, it is shown in~\cite{shkel:2018} that for the $f$-separable rate distortion problem 
\begin{align}
\fRDFm{f}{d}{D}  = \RDFm{\bar d}{f(D)}. \label{eq:fsep_avg_max}
\end{align}
That is, under excess distortion criterion, the $f$-separable RDF reduces to the classical separable case without any assumption on the underlying source. In fact for stationary ergodic sources, this result extends to both average and excess distortion criteria under some regularity assumptions (see~\cite[Theorem 1]{shkel:2018}), namely, 
\begin{align}
\fRDF{f}{d}{D}  = \fRDFm{f}{d}{D}.  \label{eq:frdf_eqv}
\end{align}
{We remark that the generalizations of the classical rate-distortion problem to indirect and $f$-separable rate distortion problems have intriguing parallels. Both generalizations could be expressed in terms of a classical amended rate distortion problem. The same insight holds when we apply both generalizations simultaneously. As we will see next, the resulting rate-distortion function could be expressed in terms of the classical amended rate distortion problem.}

\section{Single-letter characterization of the operational rates for $\IID$ sources}\label{sec:characterization}

In this section, we characterize the $f$-separable iRDFs for the setup in Fig. \ref{fig:system_model} for $\IID$ sources. Specifically, our main result states that for $\IID$ sources over finite alphabets (under mild regularity assumptions) we have that
\begin{align}
\ifRDF{f}{d}{D} &= \RDF{\tilde d}{f(D)} 
\end{align}
where $\RDF{\tilde d}{D}$ is the RDF for $({\cal X}, {\cal Z}, {\tilde d})$ with the amended distortion given by ${\tilde d}({z}, \widehat{x})=\sum_{{\cal X}}p(x|z)f(d(x,\widehat{x}))$.

\par First, we give a lemma in which we characterize the $f$-separable iRDF under the excess distortion criterion.
\begin{lemma}\label{lemma:char_excess_dist}($f$-separable iRDF under excess distortion) Given a single-letter distortion measure $d\colon {\cal X} \times \widehat{\cal X} \to [0, \infty)$ and a continuous, increasing function $f$ on $[0, \infty)$,
\begin{align}
\ifRDFm{f}{d}{D} = \iRDFm{\bar d}{f(D)} \label{char_excesss_dist}
\end{align}
where $\iRDFm{\bar d}{f(D)}$ is computed subject to the single-letter separable distortion measure $\bar{d}(x,\widehat{x})=f(d(x,\widehat{x}))$. 
\end{lemma}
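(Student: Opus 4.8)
The plan is to reduce the $f$-separable indirect excess-distortion problem to an ordinary (separable) indirect excess-distortion problem by exploiting the monotone change of variables that is built into the definition \eqref{noisy_dist_f_separable}. The crucial observation is purely pointwise: writing $\bar d=f\circ d$ and letting $\bar d^n(x^n,\widehat x^n)=\frac1n\sum_{i=1}^n \bar d(x_i,\widehat x_i)$ denote the usual arithmetic-mean separable distortion for $\bar d$, the definition of $d_f^n$ immediately yields
\begin{align}
d_f^n(x^n,\widehat x^n)=f^{-1}\bigl(\bar d^n(x^n,\widehat x^n)\bigr).\nonumber
\end{align}
First I would record this identity and then use that $f$ (hence $f^{-1}$) is continuous and strictly increasing to convert it into an identity of excess events: for every threshold $t\geq 0$,
\begin{align}
\{d_f^n(x^n,\widehat x^n)>t\}=\{\bar d^n(x^n,\widehat x^n)>f(t)\},\nonumber
\end{align}
because $f^{-1}(a)>t\iff a>f(t)$. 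Hence the excess probabilities coincide exactly, $\mathbf P[d_f^n>t]=\mathbf P[\bar d^n>f(t)]$, and any encoder/decoder pair ${\cal Z}^n\to{\cal W}\to\widehat{\cal X}^n$ is an $(n,M,t,\epsilon)$-noisy lossy source code for the $f$-separable distortion if and only if it is an $(n,M,f(t),\epsilon)$-noisy lossy source code for the amended separable distortion $\bar d$. Crucially the rate $\frac1n\log M$ is untouched by this reinterpretation, so the two coding problems are the same object seen through the relabeling $t\leftrightarrow f(t)$.

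It then remains to transport this equivalence through the $\limsup$/$\gamma$-slack in the achievability Definition~\ref{def:achievability}(2), which I would do by proving both inclusions. If $(R,D)$ is $f$-separable indirectly-and-excess achievable, then for each $\gamma>0$ there is a code sequence with $\mathbf P[\bar d^n>f(D+\gamma)]=\mathbf P[d_f^n>D+\gamma]\leq\epsilon_n\to 0$ and rate $\leq R$; given any target slack $\gamma'>0$ for the $\bar d$-problem, continuity of $f$ lets me pick $\gamma>0$ with $f(D+\gamma)\leq f(D)+\gamma'$, so that $\mathbf P[\bar d^n>f(D)+\gamma']\leq\mathbf P[\bar d^n>f(D+\gamma)]\leq\epsilon_n\to 0$, proving $(R,f(D))$ separably achievable for $\bar d$. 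Conversely, if $(R,f(D))$ is achievable for $\bar d$, then for any $\gamma>0$ strict monotonicity gives $f(D+\gamma)>f(D)$, and choosing $\gamma'\in(0,f(D+\gamma)-f(D)]$ yields $\mathbf P[d_f^n>D+\gamma]=\mathbf P[\bar d^n>f(D+\gamma)]\leq\mathbf P[\bar d^n>f(D)+\gamma']\leq\epsilon_n\to 0$, i.e.\ $(R,D)$ is $f$-separable achievable. Taking the infimum over achievable rates on both sides then delivers \eqref{char_excesss_dist}.

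The substantive ingredient is the pointwise/event identity, which is immediate from the definitions; everything downstream is bookkeeping. The one place that requires care---and the step I would flag as the main (minor) obstacle---is matching the two independent slacks $\gamma$ and $\gamma'$ across the monotone map $f$ inside the $\limsup$ definition of achievability. This is precisely where continuity of $f$ enters (forward inclusion, so that $f(D+\gamma)\downarrow f(D)$) together with strict monotonicity (reverse inclusion, so that $f(D+\gamma)-f(D)>0$), and it mirrors the direct-RDF identity \eqref{eq:fsep_avg_max} of \cite{shkel:2018}, of which this lemma is the indirect counterpart. Note that no assumption on the source (such as $\IID$ or stationarity) is needed, since the whole argument is a deterministic, sample-path-wise relabeling of the distortion level.
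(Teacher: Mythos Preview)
Your proposal is correct and follows essentially the same approach as the paper's own proof: both arguments rest on the pointwise identity $d_f^n=f^{-1}(\bar d^n)$, use strict monotonicity of $f$ to turn the $f$-separable excess event into the separable one at level $f(D+\gamma)$, and then invoke continuity of $f$ to match the two slacks $\gamma\leftrightarrow\delta$ (your $\gamma'$) so that the achievable rate sets coincide. Your treatment of the two directions of the slack matching is, if anything, slightly more explicit than the paper's.
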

Next, we make assumptions that will be used to derive the single-letter information theoretic characterization to our problem. These assumptions are a counterpart of the assumptions utilized in \cite[Theorem 1]{shkel:2018}; however, due to the difficulty of the indirect rate distortion problem, these assumptions are more restrictive, e.g., we only consider finite alphabets.
\begin{assumptions*}\label{assumptions:assumptions} Suppose that the following statements are true.
\begin{enumerate}
\item[{\bf (A1)}] The joint process $\{({\bf x}^n,{\bf z}^n):~n=1,2,\ldots\}$ is $\IID$ sequence of random variables, namely, ${p}(x^n,z^n)={p}(x){p}(z|x)\times\ldots\times{p}(x){p}(z|x)=p(x|z)p(z)\times\ldots\times{p(x|z)p(z)},$ for any $n$;
\item[{\bf (A2)}] 
The single-letter distortion ${d}(\cdot,\cdot)$ and is such that 
\begin{align}
\max_{(x,\widehat{x})\in{\cal X} \times \widehat{\cal X}} d(x,\widehat{x})<\infty; \label{assumption:eq.1}
\end{align}
\item[{\bf (A3)}] The alphabets $({\cal X}, {\cal Z}, \widehat{\cal X})$ are finite.
\end{enumerate}
\end{assumptions*}
{In particular, assumption (A2) rules out pathological rate-distortion function for which finite distortion is only possible at full rate.}
\begin{corollary}(Consequence of Lemma~\ref{lemma:char_excess_dist}) \label{cor:char_excess_dist}
Under Assumptions {\bf (A1)}-{\bf (A3)}, a consequence of Lemma \ref{lemma:char_excess_dist} is the following series of equalities 
\begin{align}
\ifRDFm{f}{d}{D} = \iRDFm{\bar d}{f(D)} = \fRDFm{f}{\hat d}{D} = \RDFm{\tilde d}{f(D)}
\end{align}
where 
\begin{align}
\bar{d}({x}, \widehat{x})&=f(d(x,\widehat{x}))\label{f_separable_dist}\\
\hat{d}(z,\widehat{x})&=f^{-1} \left(\sum_{{x}}p(x|z)f(d(x,\widehat{x}))\right)\label{inv_amended_f_separable_dist}\\
{\tilde d} (z, \widehat{x}) &= \sum_{{x}}p(x|z)f(d(x,\widehat{x})).\label{amended_f_separable}
\end{align}
\end{corollary}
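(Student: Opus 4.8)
The plan is to read the four-term chain as an assembly of Lemma~\ref{lemma:char_excess_dist} with the equivalences recalled in Section~\ref{sec:prior_work}, glued together by two elementary identities between the amended distortion measures. The leftmost equality $\ifRDFm{f}{d}{D}=\iRDFm{\bar d}{f(D)}$ is precisely the statement of Lemma~\ref{lemma:char_excess_dist} with $\bar d$ as in \eqref{f_separable_dist}, so nothing remains to be shown there. I would then establish the remaining two equalities separately and let them meet at the common value $\RDFm{\tilde d}{f(D)}$.

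For the rightmost equality I would apply the $f$-separable excess result \eqref{eq:fsep_avg_max} to the \emph{direct} problem on $({\cal Z},\widehat{\cal X})$ with base distortion $\hat d$ in place of $d$. This yields $\fRDFm{f}{\hat d}{D}=\RDFm{f\circ\hat d}{f(D)}$, and the key observation is the cancellation $f(\hat d(z,\widehat{x}))=f\!\left(f^{-1}\!\left(\sum_x p(x|z)f(d(x,\widehat{x}))\right)\right)=\tilde d(z,\widehat{x})$, which follows at once from the definitions \eqref{inv_amended_f_separable_dist}--\eqref{amended_f_separable}. Since \eqref{eq:fsep_avg_max} requires no assumption on the source, this step is immediate once $\hat d$ is bounded, which it is by (A2) and the continuity of $f$.

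For the middle equality it suffices to show $\iRDFm{\bar d}{f(D)}=\RDFm{\tilde d}{f(D)}$, and I would obtain it by chaining the separable equivalences in order. By \eqref{eq:irdf_eqv} (valid for the $\IID$ source of (A1)) the separable iRDF under excess distortion equals the one under average distortion, $\iRDFm{\bar d}{f(D)}=\iRDF{\bar d}{f(D)}$; by the Dobrushin reduction \eqref{eq:irdf_main} (finite alphabets (A3), bounded $\bar d$) this equals $\RDF{\widetilde{\bar d}}{f(D)}$ with $\widetilde{\bar d}(z,\widehat{x})=\sum_x p(x|z)\bar d(x,\widehat{x})$; the second identity $\widetilde{\bar d}=\tilde d$ is again immediate from $\bar d=f\circ d$ and \eqref{amended_f_separable}; finally \eqref{eq:rdf_eqv} (the $\IID$ source is stationary ergodic and $\tilde d$ is bounded) converts the average-distortion RDF back to the excess one, $\RDF{\tilde d}{f(D)}=\RDFm{\tilde d}{f(D)}$.

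The main obstacle here is bookkeeping rather than analysis: I would be careful that every cited equivalence is applied on the correct underlying space (the iRDF lives on $({\cal X},{\cal Z},\widehat{\cal X})$ whereas the modified direct RDFs live on $({\cal Z},\widehat{\cal X})$, as flagged in Remark~\ref{remark:2}) and at the correct distortion level $f(D)$, and that each amended distortion stays bounded so that the theorems of Section~\ref{sec:prior_work} apply. Boundedness of $\bar d=f\circ d$ follows from (A2) together with the continuity of $f$ on the compact range of $d$, and $\tilde d$, being a convex combination of such values, inherits boundedness; the two cancellation identities $f\circ\hat d=\tilde d$ and $\widetilde{\bar d}=\tilde d$ are the only computations, and both are one-line consequences of $f\circ f^{-1}=\mathrm{id}$ and the linearity of the conditional average.
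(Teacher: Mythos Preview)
Your argument is correct and follows essentially the same route as the paper: the first equality is Lemma~\ref{lemma:char_excess_dist}, the middle equality is obtained by chaining \eqref{eq:irdf_eqv}, \eqref{eq:irdf_main} and \eqref{eq:rdf_eqv} (with the identification $\widetilde{\bar d}=\tilde d$), and the rightmost equality is \eqref{eq:fsep_avg_max} applied on $({\cal Z},\widehat{\cal X})$ together with $f\circ\hat d=\tilde d$. Your explicit verification of the two cancellation identities and of the boundedness of the amended distortions is more detailed than the paper's write-up but does not differ from it in substance.
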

\begin{IEEEproof}
The first equality, $\ifRDFm{f}{d}{D} = \iRDFm{\bar d}{f(D)}$, is shown in Lemma~\ref{lemma:char_excess_dist}. We have that $\iRDFm{\bar d}{f(D)} = \RDFm{\tilde d}{f(D)}$ from~(\ref{eq:rdf_eqv}),~(\ref{eq:irdf_main}) and~(\ref{eq:irdf_eqv}). Finally, $\fRDFm{f}{\hat d}{D} = \RDFm{\tilde d}{f(D)}$ follows from~(\ref{eq:fsep_avg_max}). This completes the proof.
\end{IEEEproof}

Next, we show the same result for the average rate-distortion functions.
\begin{theorem}\label{theorem:multi_char}($f$-separable iRDF under average distortion) Under Assumptions {\bf (A1)}-{\bf (A3)}, the $f$-separable iRDF under an average distortion constraint satisfies the following equality
\begin{align}
\ifRDF{f}{d}{D} &= \iRDF{\bar d}{f(D)} \label{eq:thm_eq1}
\end{align}
where $\bar{d}({x}, \widehat{x})$ is given in \eqref{f_separable_dist}. In particular, this implies that under Assumptions {\bf (A1)}-{\bf (A3)},  
\begin{align}
\ifRDF{f}{d}{D} &= \ifRDFm{f}{d}{D} = \fRDF{f}{\hat d}{D} = \RDF{\tilde d}{f(D)}  \label{eq:thm_eq2a}
\end{align}
and
\begin{align}
\ifRDF{f}{d}{D} & = \inf_{\substack{q(\widehat{x}|z)\\
{\bf E}\left[{\tilde d}({\bf z}, \widehat{\bf x})\right]\leq{f}({D})}}I({\bf z};\widehat{\bf x})\label{eq:thm_eq2b}
\end{align}
where $\hat{d}(z,\widehat{x})$ and ${\bar d}({z}, \widehat{x})$ are given by  \eqref{inv_amended_f_separable_dist} and \eqref{amended_f_separable}, respectively.
\end{theorem}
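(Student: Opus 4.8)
The plan is to reduce the entire statement to the single equality \eqref{eq:thm_eq1}, namely $\ifRDF{f}{d}{D} = \iRDF{\bar d}{f(D)}$; once this is secured, the chain \eqref{eq:thm_eq2a} follows by concatenating previously established identities and \eqref{eq:thm_eq2b} is just the single-letter formula for the resulting direct RDF. I would prove \eqref{eq:thm_eq1} by arguing the two inequalities separately, using throughout the elementary fact that, since $f$ (hence $f^{-1}$) is continuous and strictly increasing, for every block length $n$ and every level $t$ one has the pointwise equivalence $d_f^n(x^n,\widehat x^n) > f^{-1}(t) \iff \tfrac1n\sum_i \bar d(x_i,\widehat x_i) > t$, with $\bar d = f\circ d$ as in \eqref{f_separable_dist}. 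This converts statements about the $f$-separable distortion $d_f^n$ into statements about the ordinary separable distortion with per-letter cost $\bar d$.

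For the achievability inequality $\ifRDF{f}{d}{D} \le \iRDF{\bar d}{f(D)}$ I would route through the already available excess-distortion characterization. First I would show $\ifRDF{f}{d}{D} \le \ifRDFm{f}{d}{D}$: given an excess-distortion code sequence meeting level $D+\gamma$ with vanishing error $\epsilon_n$, boundedness of the distortion under (A2) (so that $d_{\max} := \max_{x,\widehat x} d(x,\widehat x) < \infty$ and $d_f^n \le d_{\max}$) yields $\mathbf{E}[d_f^n] \le (D+\gamma) + d_{\max}\,\epsilon_n$, whence $\limsup_n \mathbf{E}[d_f^n] \le D+\gamma$; a routine diagonalization over $\gamma \downarrow 0$ produces a single code sequence with the same asymptotic rate and $\limsup_n \mathbf{E}[d_f^n] \le D$. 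Combining with Lemma \ref{lemma:char_excess_dist} and the separable-iRDF equivalence \eqref{eq:irdf_eqv} gives $\ifRDFm{f}{d}{D} = \iRDFm{\bar d}{f(D)} = \iRDF{\bar d}{f(D)}$, which closes this direction.

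The converse $\ifRDF{f}{d}{D} \ge \iRDF{\bar d}{f(D)}$ is the crux, and it is where the strong converse of \cite{Kostina:2016} enters. I would argue by contradiction: suppose a code sequence has $\limsup_n \tfrac1n\log M_n \le R < \iRDF{\bar d}{f(D)}$ and $\limsup_n \mathbf{E}[d_f^n] \le D$. The naive move, applying the strong converse at level $f(D)$, only gives $\mathbf{P}[\bar d^n > f(D)] \to 1$, i.e. $\mathbf{P}[d_f^n > D] \to 1$, which is consistent with $\mathbf{E}[d_f^n]\to D$ and yields no contradiction; the essential trick is to invoke it at a strictly larger level. Since $\iRDF{\bar d}{\cdot}$ is non-increasing and, by the finite alphabets (A3) and boundedness (A2), continuous, the strict inequality $\iRDF{\bar d}{f(D)} > R \ge 0$ furnishes by right-continuity a level $D_0 > f(D)$ with $R < \iRDF{\bar d}{D_0} = \iRDFm{\bar d}{D_0}$ (the last equality by \eqref{eq:irdf_eqv}). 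Applying the strong converse for the separable indirect problem at level $D_0$ gives $\mathbf{P}[\bar d^n > D_0] \to 1$; by the pointwise equivalence above this reads $\mathbf{P}[d_f^n > f^{-1}(D_0)] \to 1$ with $f^{-1}(D_0) > D$ strictly. Hence $\mathbf{E}[d_f^n] \ge f^{-1}(D_0)\,\mathbf{P}[d_f^n > f^{-1}(D_0)] \to f^{-1}(D_0) > D$, so $\limsup_n \mathbf{E}[d_f^n] > D$, contradicting the hypothesis and proving the converse.

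Finally, I would assemble \eqref{eq:thm_eq2a} by reading off $\ifRDF{f}{d}{D} = \iRDF{\bar d}{f(D)} = \iRDFm{\bar d}{f(D)} = \ifRDFm{f}{d}{D}$ from \eqref{eq:thm_eq1}, \eqref{eq:irdf_eqv} and Lemma \ref{lemma:char_excess_dist}; then $\iRDFm{\bar d}{f(D)} = \RDFm{\tilde d}{f(D)} = \RDF{\tilde d}{f(D)}$ via Corollary \ref{cor:char_excess_dist} and \eqref{eq:rdf_eqv}, and $\fRDF{f}{\hat d}{D} = \fRDFm{f}{\hat d}{D} = \RDFm{\tilde d}{f(D)}$ via \eqref{eq:frdf_eqv} and Corollary \ref{cor:char_excess_dist}. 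Equation \eqref{eq:thm_eq2b} is then the single-letter expression for $\RDF{\tilde d}{f(D)}$, using $\mathbf{E}[\bar d(\mathbf{x},\widehat{\mathbf{x}})] = \mathbf{E}[\tilde d(\mathbf{z},\widehat{\mathbf{x}})]$ under the Markov chain ${\bf x}-{\bf z}-\widehat{\bf x}$ of (A1). I expect the main obstacle to be precisely the gap argument in the converse: one must apply the strong converse at a level $D_0$ \emph{strictly} above $f(D)$ so that the forced excess distortion $f^{-1}(D_0)$ sits strictly above $D$, which is what upgrades $\mathbf{P}[d_f^n > f^{-1}(D_0)] \to 1$ into a genuine violation of the average constraint; applying it at $f(D)$ itself is insufficient.
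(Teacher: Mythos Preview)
Your proposal is correct and follows essentially the same route as the paper: both reduce everything to \eqref{eq:thm_eq1}, prove the achievability direction via the chain $\ifRDF{f}{d}{D} \le \ifRDFm{f}{d}{D} = \iRDFm{\bar d}{f(D)} = \iRDF{\bar d}{f(D)}$ (boundedness plus Lemma~\ref{lemma:char_excess_dist} plus \eqref{eq:irdf_eqv}), and prove the converse by contradiction, invoking the strong converse of \cite{Kostina:2016} at a distortion level strictly above $f(D)$ so that the excess probability going to $1$ forces $\limsup_n \mathbf{E}[d_f^n]$ strictly above $D$. Your explicit remark that applying the strong converse at $f(D)$ itself is insufficient, and that one must use continuity of $\iRDF{\bar d}{\cdot}$ to find the slack $D_0>f(D)$, is exactly the mechanism the paper uses (there phrased as choosing $\gamma>0$ with $R<\iRDF{\bar d}{f(D+\gamma)}$).
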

\begin{IEEEproof}
Equations~(\ref{eq:thm_eq2a}) and~(\ref{eq:thm_eq2b}) follow from~(\ref{eq:thm_eq1}) and the results in Section~\ref{sec:prior_work}. Namely, we have that $\iRDF{\bar d}{f(D)} = \RDF{\tilde d}{f(D)}$ from~(\ref{eq:irdf_main}); $ \fRDF{f}{\hat d}{D} = \RDF{\tilde d}{f(D)}$ from~(\ref{eq:fsep_avg_max}) and~(\ref{eq:frdf_eqv}), and $\iRDF{\bar d}{f(D)} = \iRDFm{\bar d}{f(D)} = \ifRDFm{f}{d}{D}$ from~(\ref{eq:irdf_eqv}) and Lemma~\ref{lemma:char_excess_dist}. Likewise,~(\ref{eq:thm_eq2b}) is a consequence of~(\ref{eq:thm_eq1}) and~(\ref{eq:irdf_main}).\\
It remains to show~(\ref{eq:thm_eq1}). To do it, we need the following useful lemma.
\begin{lemma}\label{lemma:A1} Suppose that the remote source $({\bf x}^n, {\bf z}^n)$ and the sequence of distortion measures $\{d^n\}_{n=1}^\infty$ are such that
\begin{align}
\limsup_{n \to \infty} \sup_{(x^n, \widehat{x}^n)}d^n( x^n, \widehat {x}^n) \leq \Delta < \infty.\label{cond_lemma_a1}
\end{align}
Then, if the rate-distortion pair $(R,D)$ is excess distortion achievable, it is achievable under the average distortion. 
\end{lemma}
First note that $f$-separable iRDF can be upper bounded as follows:
\begin{align}
\ifRDF{f}{d}{D} \stackrel{(a)} \leq \ifRDFm{f}{d}{D} \stackrel{(b)}= \iRDFm{\bar d}{f(D)} \stackrel{(c)}= \iRDF{\bar d}{f(D)}\label{char_achievable_rates}    
\end{align}
where $(a)$ is a consequence of Assumption {\bf (A2)} and Lemma \ref{lemma:A1}; $(b)$ follows from Lemma \ref{lemma:char_excess_dist}; $(c)$ follows from the equivalence between excess and average iRDF, see~(\ref{eq:irdf_eqv}).\\
The other direction,
\begin{align}
\ifRDF{f}{d}{D} \geq \iRDF{\bar d}{f(D)}
\end{align}
is a consequence of the strong converse by~\cite{Kostina:2016}. This completes the proof.
\end{IEEEproof}
One pleasing consequence of Theorem \ref{theorem:multi_char} is the following corollary.
\begin{corollary}\label{lemma:implicit_sol}(Implicit solution of $\iRDF{\bar d}{f(D)}$) The characterization in \eqref{eq:thm_eq2b} via \eqref{eq:thm_eq1} admits the following implicit solution to its minimizer 
\begin{align}
p^*(\widehat{x}|z)=\frac{e^{s\tilde{d}(z,\widehat{x})}p^*(\widehat{x})}{\sum_{\widehat{x}}e^{s\tilde{d}(z,\widehat{x})}p^*(\widehat{x})}\label{optimal_minimizer},
\end{align}
where $s<0$ is the Lagrange multiplier associated with the amended distortion penalty ${\bf E}[{\tilde d}({z}, \widehat{x})]\leq{f}(D)$ and $p^*(\widehat{x})=\sum_{z}q^*(\widehat{x}|z)p(z)$ is the $\widehat{\cal X}$-marginal of the output $\IID$ process $\widehat{\bf x}^n$. Moreover, the optimal parametric solution of  \eqref{eq:thm_eq2b} via \eqref{eq:thm_eq2a} when $\ifRDF{f}{d}{D}>0$ is given by
\begin{align}
&\ifRDF{f}{d}{D^*}=sf(D^*)-\sum_{z}p(z)\log\left(\sum_{\widehat{x}}e^{s\tilde{d}({z},\widehat{x})}p^*(\widehat{x})\right).\label{optimal_parametric_sol}
\end{align}
\end{corollary}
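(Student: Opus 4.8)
The plan is to treat the single-letter program \eqref{eq:thm_eq2b}, established in Theorem~\ref{theorem:multi_char}, as a finite-dimensional convex optimization problem in the test channel $q(\widehat{x}|z)$ and to characterize its minimizer through the associated Karush--Kuhn--Tucker (KKT) conditions. With the source marginal $p(z)$ held fixed, the objective $I({\bf z};\widehat{\bf x})=\sum_{z,\widehat{x}}p(z)q(\widehat{x}|z)\log\frac{q(\widehat{x}|z)}{q(\widehat{x})}$, where $q(\widehat{x})=\sum_z p(z)q(\widehat{x}|z)$, is convex in $q(\widehat{x}|z)$, the amended distortion constraint ${\bf E}[\tilde{d}({\bf z},\widehat{\bf x})]\le f(D)$ of \eqref{amended_f_separable} is linear, and the normalization constraints $\sum_{\widehat{x}}q(\widehat{x}|z)=1$ are affine. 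Hence the feasible set is a convex polytope and stationarity together with complementary slackness are both necessary and sufficient for global optimality; this is precisely the classical Blahut--Arimoto setting, now applied to the amended distortion $\tilde d$.

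Next I would form the Lagrangian $J=I({\bf z};\widehat{\bf x})-s\sum_{z,\widehat{x}}p(z)q(\widehat{x}|z)\tilde{d}(z,\widehat{x})+\sum_z\nu(z)\sum_{\widehat{x}}q(\widehat{x}|z)$, with $s$ the multiplier on the distortion penalty and $\nu(z)$ the multipliers enforcing the per-$z$ simplex constraints. The one computation requiring care is the derivative of the mutual information: since $q(\widehat{x})$ itself depends on $q(\widehat{x}|z)$, differentiating the marginal-entropy term produces an extra contribution, but it cancels against the diagonal term to leave $\frac{\partial I}{\partial q(\widehat{x}|z)}=p(z)\log\frac{q(\widehat{x}|z)}{q(\widehat{x})}$. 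Setting $\partial J/\partial q(\widehat{x}|z)=0$ yields $\log\frac{q^*(\widehat{x}|z)}{q^*(\widehat{x})}=s\,\tilde{d}(z,\widehat{x})-\nu(z)/p(z)$; exponentiating and fixing the $z$-dependent constant by normalization gives exactly the self-consistent form \eqref{optimal_minimizer}, with $p^*(\widehat{x})=q^*(\widehat{x})$ the induced output marginal. The sign $s<0$ follows from the sensitivity interpretation of the multiplier, which equals the slope of the convex, nonincreasing rate-distortion curve in the variable $f(D)$.

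For the parametric rate \eqref{optimal_parametric_sol}, I would substitute the optimality relation $\log\frac{q^*(\widehat{x}|z)}{p^*(\widehat{x})}=s\,\tilde{d}(z,\widehat{x})-\log\bigl(\sum_{\widehat{x}}e^{s\tilde{d}(z,\widehat{x})}p^*(\widehat{x})\bigr)$ back into $\ifRDF{f}{d}{D^*}=\sum_{z,\widehat{x}}p(z)q^*(\widehat{x}|z)\log\frac{q^*(\widehat{x}|z)}{p^*(\widehat{x})}$. Because $\sum_{\widehat{x}}q^*(\widehat{x}|z)=1$, the $z$-dependent log-partition factor pulls out of the inner sum, leaving $s\,{\bf E}[\tilde{d}({\bf z},\widehat{\bf x})]-\sum_z p(z)\log\bigl(\sum_{\widehat{x}}e^{s\tilde{d}(z,\widehat{x})}p^*(\widehat{x})\bigr)$. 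The hypothesis $\ifRDF{f}{d}{D^*}>0$ places the operating point on the strictly decreasing portion of the curve, where complementary slackness forces the distortion constraint to be active, i.e.\ ${\bf E}[\tilde{d}({\bf z},\widehat{\bf x})]=f(D^*)$; inserting this equality produces \eqref{optimal_parametric_sol}.

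The step I expect to be the main obstacle is reconciling the stationarity condition, derived as if the optimum were interior, with the nonnegativity constraints, since in general some reconstruction letters may carry zero mass at the optimum. The clean resolution is that \eqref{optimal_minimizer} is consistent precisely on the support of $p^*(\widehat{x})$: letters with $p^*(\widehat{x})=0$ drop out of both numerator and denominator, so the relations should be read as a coupled fixed point over that support. This coupling---$q^*(\widehat{x}|z)$ depending on its own marginal $p^*(\widehat{x})$---is exactly why the minimizer is characterized only implicitly rather than in closed form.
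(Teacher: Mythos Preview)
Your proposal is correct and follows essentially the same route as the paper: form the Lagrangian for the convex program \eqref{eq:thm_eq2b} with multipliers on the distortion and the per-$z$ normalization, differentiate (the paper performs the same cancellation you describe to obtain $\partial I/\partial q(\widehat{x}|z)=p(z)\log\frac{q(\widehat{x}|z)}{q(\widehat{x})}$), exponentiate and normalize to get \eqref{optimal_minimizer}, and then substitute back---using activity of the distortion constraint when $\ifRDF{f}{d}{D}>0$---to obtain \eqref{optimal_parametric_sol}. Your final paragraph on the support of $p^*(\widehat{x})$ is in fact a slightly more careful treatment of the boundary case than the paper's, which simply introduces nonnegativity multipliers $\mu(z,\widehat{x})$ and sets them to zero at interior points.
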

{By taking $p(z|x)$ to be a noiseless channel, Corollary~\ref{lemma:implicit_sol} gives us an implicit solution for $\fRDF{f}{d}{D}$ which was suggested in~\cite{shkel:2018}.}

\section{Examples}\label{sec:simulations}

In what follows, we give two examples to demonstrate the impact of $f$-separable distortion measures to a popular class of finite alphabet sources. 

\begin{example}(Binary memoryless sources)\label{example:bms_nc} Let the joint process $({\bf x}^n,{\bf z}^n)$ form an $\IID$ sequence of RVs such that ${\cal X}={\cal Z}=\widehat{\cal X}=\{0,1\}$ furnished with the classical single-letter Hamming distortion, i.e.,
\begin{align}
d(x,\widehat{x})=\begin{cases}
0,~&\mbox{if $x=\widehat{x}$}\\
1~&\mbox{if $x\neq\widehat{x}$}.
\end{cases}\label{hamming_dist}    
\end{align}
Moreover, let ${\bf x}_i\sim Bernoulli(\frac{1}{2})$  and a binary memoryless channel that induces a transition probability of the form
\begin{align}
p(z|x)=\begin{bmatrix}
1-\beta & \beta\\
\beta & 1-\beta
\end{bmatrix},~~\beta\in\left[0,\frac{1}{2}\right).\label{trans_prob_matrix}
\end{align}
\end{example}
Using the above input data, we obtain the following theorem.
\begin{theorem}(Closed-form solution)\label{thm:closed_form_binary_hamming}
For the previous inputs and for any continuous, increasing function $f(\cdot)$, we obtain
\begin{align}
&\ifRDF{f}{d}{D}=\iRDF{\bar d}{f(D)}=\nonumber\\
&\left[1-h_b\left(\frac{{f}(D)-(1-\beta)f(0)-\beta f(1)}{(1-\beta)f(1)+\beta f(0)-(1-\beta)f(0)-\beta f(1)}\right)\right]^{+}
\label{general_closed_form_bes_noisy_channel}    
\end{align}
where $[\cdot]^{+}=\max\{0,\cdot\}$, $f(D)\in\left[(1-\beta)f(0)+\beta f(1),\frac{f(0)+f(1)}{2}\right]$ and $h_b(\cdot)$ denotes the binary entropy function.
\end{theorem}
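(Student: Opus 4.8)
The plan is to recognize that the first equality, $\ifRDF{f}{d}{D}=\iRDF{\bar d}{f(D)}$, is exactly the content of Theorem~\ref{theorem:multi_char}, so the entire task reduces to evaluating the right-hand side. By \eqref{eq:thm_eq2a} this equals the classical direct RDF $\RDF{\tilde d}{f(D)}$ over $({\cal Z},\widehat{\cal X})$ with the amended distortion $\tilde d$ from \eqref{amended_f_separable}, and by \eqref{eq:thm_eq2b} it is the minimization of $I({\bf z};\widehat{\bf x})$ subject to ${\bf E}[\tilde d({\bf z},\widehat{\bf x})]\le f(D)$. The first concrete step is therefore to compute the posterior $p(x|z)$: since ${\bf x}\sim\mathrm{Bernoulli}(\tfrac12)$ and the matrix \eqref{trans_prob_matrix} is doubly stochastic, the marginal satisfies $p(z)=\tfrac12$ for both symbols, so Bayes' rule gives $p(x|z)=p(z|x)$; the reverse channel is again a BSC with crossover $\beta$, and ${\bf z}\sim\mathrm{Bernoulli}(\tfrac12)$.

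Next I would evaluate the four entries of $\tilde d$. Using $f(d(x,\widehat{x}))=f(0)$ when $x=\widehat{x}$ and $f(1)$ otherwise, a one-line computation gives $\tilde d(z,\widehat{x})=(1-\beta)f(0)+\beta f(1)$ whenever $z=\widehat{x}$ and $\tilde d(z,\widehat{x})=(1-\beta)f(1)+\beta f(0)$ whenever $z\neq\widehat{x}$. Writing these two values as $a$ and $b$, the structural observation that drives the whole proof is that $\tilde d$ is an affine function of the Hamming distortion $d_H$ on $({\cal Z},\widehat{\cal X})$, namely $\tilde d(z,\widehat{x})=a+(b-a)\,d_H(z,\widehat{x})$, with $b-a=(1-2\beta)(f(1)-f(0))>0$ since $\beta<\tfrac12$ and $f$ is increasing.

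Given this affine structure, the constraint ${\bf E}[\tilde d({\bf z},\widehat{\bf x})]\le f(D)$ is equivalent to ${\bf E}[d_H({\bf z},\widehat{\bf x})]\le D_H$ with $D_H\triangleq (f(D)-a)/(b-a)$, while the objective $I({\bf z};\widehat{\bf x})$ depends only on the joint law and is untouched by the reparametrization. Hence $\RDF{\tilde d}{f(D)}$ collapses to the classical Hamming RDF of a $\mathrm{Bernoulli}(\tfrac12)$ source evaluated at $D_H$, which is $[1-h_b(D_H)]^{+}$ (see, e.g., \cite[Theorem 10.2.1]{cover-thomas:2006}). Substituting $a$ and $b$ back into $D_H$ reproduces exactly the argument of $h_b$ in \eqref{general_closed_form_bes_noisy_channel}.

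It remains to confirm the stated range. The map $f(D)\mapsto D_H$ is affine and increasing; at the left endpoint $f(D)=a$ it gives $D_H=0$ (the feasibility boundary, since $\tilde d\ge a$ always, forcing full rate $R=1$), and a short calculation using $b-a=2(\tfrac12-\beta)(f(1)-f(0))$ shows that $f(D)=\tfrac{f(0)+f(1)}{2}$ maps to $D_H=\tfrac12$ (zero rate). Thus $f(D)\in[a,\tfrac{f(0)+f(1)}{2}]$ is precisely the interval $D_H\in[0,\tfrac12]$ on which $[1-h_b(D_H)]^{+}=1-h_b(D_H)$ is the correct RDF. I do not expect a genuine obstacle here — the argument is a chain of substitutions — and the only point needing a line of care is that rescaling the distortion by the positive factor $b-a$ and shifting by $a$ leaves the rate-distortion minimization invariant up to the corresponding reparametrization of the distortion level.
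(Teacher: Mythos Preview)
Your proposal is correct and follows essentially the same route as the paper: compute the posterior $p(x|z)$ (a BSC$(\beta)$ by symmetry), obtain the two-valued amended distortion $\tilde d$, recognize that it is an affine transformation of the Hamming distortion, and thereby reduce $\RDF{\tilde d}{f(D)}$ to the standard Bernoulli$(\tfrac12)$/Hamming RDF evaluated at the reparametrized level $D_H$. The paper writes the affine rescaling as an explicit chain of equalities on the constraint, whereas you phrase it as ``$\tilde d=a+(b-a)d_H$,'' but the substance and the key steps are identical.
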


In Fig.~\ref{fig:f_separable_noisy_rdf} we illustrate some plots of  \eqref{general_closed_form_bes_noisy_channel} for various functions $f(\cdot)$ and different distortion levels $D$. It should be noted that due to the nature of the indirect rate distortion problem compared to the classical rate distortion problem, there are different minimum distortion thresholds for which the curves are well-defined. In particular, when the function $f$ is exponential, with $\beta=0.01$ and $\rho=9.2$, Fig.~\ref{fig:f_separable_noisy_rdf} demonstrates that the $f$-separable iRDF curve is non-convex, monotonic and well-defined for $D\in\left(D^{\exp}_{\min}, D_{\max}^{\exp}\right]=\left(\frac{1}{\rho}\log(1-\beta+\beta\exp(\rho)), \frac{1}{\rho}\log\left(\frac{1+\exp(\rho)}{2}\right)\right]$. Similarly, if the function $f$ is third order polynomial with $\beta=0.15$ and $\alpha=0.4$ or quadratic with $\beta=0.001$, then, from Fig. \ref{fig:f_separable_noisy_rdf} we observe that  $\iRDF{f, d}{D}$ is again non-convex, monotonic and well-defined for $D\in\left(D^{pol}_{\min}, D^{pol}_{\max}\right]=\left(\sqrt[3]{(1-a)^3\beta-a^3(1-b)}+a,\sqrt[3]{\frac{(1-a)^3-a^3}{2}}+a\right]$ and for $D\in\left(D^{qua}_{\min}, D^{qua}_{\max}\right]=\left(\sqrt{\beta},\sqrt{\frac{1}{2}}\right]$, respectively. 
Clearly, if in Fig. \ref{fig:f_separable_noisy_rdf} we consider the function $f$ to be the identity map, then, as Fig. \ref{fig:f_separable_noisy_rdf} demonstrates, we obtain $\iRDF{f, d}{D}=\iRDF{\bar d}{f(D)}={\cal R}_{\tilde d}(D)$ and the closed-form solution of \eqref{general_closed_form_bes_noisy_channel} recovers the solution of  \cite[Exercise 3.8]{berger:1971} i.e.,
\begin{align}
\iRDF{f, d}{D}=\left[1-h_b\left(\frac{D-\beta}{1-2\beta}\right)\right]^{+} \mbox{if $D\in[\beta,\frac{1}{2}]$}.\label{closed_form_ind_rdf}
\end{align}
This example aims at further emphasizing on the impact of the $f$-separable (non-linear) distortion constraint on the indirect rate distortion curve as opposed to the classical  separable (linear) distortions for which the indirect rate-distortion curve is always convex.
\begin{figure}[htp]
\centering
\includegraphics[width=\linewidth]{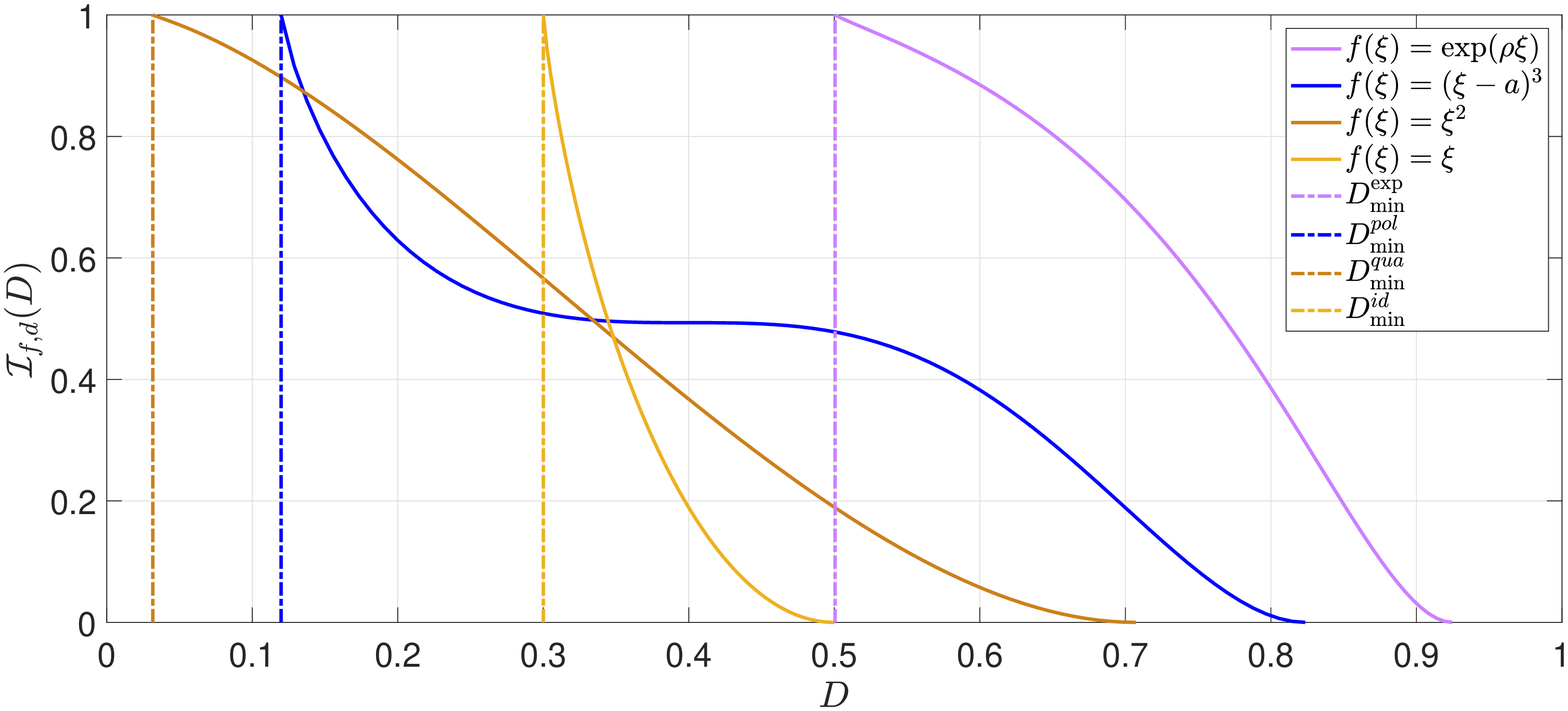}
\caption{Computation of $\ifRDF{f}{d}{D}$ for various functions $f(\cdot)$ and single-letter Hamming distance.}
\label{fig:f_separable_noisy_rdf}
\end{figure}
\paragraph*{Special case} If in Example \ref{example:bms_nc} we assume that in \eqref{trans_prob_matrix} we have $\beta=0$, then our problem recovers the solution of \cite[eq. (44)]{shkel:2018} for ${\bf x}\sim Bernoulli(\frac{1}{2})$.

\begin{example}\label{example:bec}
Let the joint process $({\bf x}^n,{\bf z}^n)$ form an $\IID$ sequence of RVs such that ${\cal X}=\widehat{\cal X}=\{0,1\}$, ${\cal Z}=\{0,e,1\}$ furnished with the Hamming distortion in \eqref{hamming_dist}. Moreover, let ${\bf x}_i\sim Bernoulli(\frac{1}{2})$  and a binary memoryless erasure channel that induces a transition probability of the form
\begin{align}
p(z|x)=\begin{bmatrix}
1-\delta & 0\\
\delta & \delta\\
0 & 1-\delta
\end{bmatrix},~~\delta\in\left[0,1\right].\label{trans_prob_matrix_bec}
\end{align}
\end{example}
Using the above input data, we obtain the following theorem.
\begin{theorem}(Closed-form solution)
For the previous input data, and for any continuous, increasing function $f(\cdot)$ we obtain
\begin{align}
&\ifRDF{f}{d}{D}=\iRDF{\bar d}{f(D)}=\nonumber\\
&\left[
(1-\delta)\left(\log(2)-h_b\left(\frac{{f}(D)-\frac{\delta}{2}f(1)-f(0)(1-\frac{\delta}{2})}{(1-\delta)(f(1)-f(0))}\right)\right)\right]^{+}\label{general_closed_form_bec}
\end{align}
where ${f}(D)\in\left[(1-\frac{\delta}{2})f(0)+\frac{\delta}{2} f(1),\frac{f(1)+f(0)}{2}\right]$.
\end{theorem}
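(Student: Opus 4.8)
The plan is to reduce the claim to a single-letter direct rate-distortion computation and then solve the resulting convex program in closed form. By Theorem~\ref{theorem:multi_char}, Assumptions~{\bf (A1)}--{\bf (A3)} hold here (the alphabets $\{0,1\}$ and $\{0,e,1\}$ are finite, the Hamming distortion is bounded, and the pair is $\IID$), so $\ifRDF{f}{d}{D} = \iRDF{\bar d}{f(D)} = \RDF{\tilde d}{f(D)}$, where $\tilde d(z,\widehat x) = \sum_{x} p(x|z)\,f(d(x,\widehat x))$ is the amended per-letter distortion. This already establishes the first equality in the statement, and it remains to evaluate the classical direct RDF $\RDF{\tilde d}{f(D)} = \inf_{q(\widehat x|z):\,{\bf E}[\tilde d]\le f(D)} I({\bf z};\widehat{\bf x})$ for the explicit $\tilde d$ induced by \eqref{trans_prob_matrix_bec}.

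First I would compute the backward channel $p(x|z)$ from \eqref{trans_prob_matrix_bec} together with $p(x)=\tfrac12$ via Bayes' rule, obtaining $p(x=0|z=0)=p(x=1|z=1)=1$ and the uniform posterior $p(x=0|z=e)=p(x=1|z=e)=\tfrac12$. Substituting the Hamming distortion into $\tilde d$ then gives $\tilde d(z,\widehat x)\in\{f(0),f(1)\}$ according to whether $\widehat x$ matches the (now known) source symbol when $z\in\{0,1\}$, whereas for the erasure symbol $\tilde d(e,\widehat x)=\tfrac{f(0)+f(1)}2$ independently of $\widehat x$. The key structural observation is that the erasure output carries no actionable information: it contributes a fixed penalty $\delta\cdot\tfrac{f(0)+f(1)}2$ to the average distortion regardless of the reconstruction rule $q(\widehat x|z=e)$, so this branch of the test channel may be chosen purely to minimize the mutual information.

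Next I would solve the convex program $\RDF{\tilde d}{f(D)}$. Since mutual information is convex in $q(\widehat x|z)$ for fixed $p(z)$ and the distortion constraint is linear, and since the source, channel and distortion are invariant under the relabeling $0\leftrightarrow 1$ (with $e$ fixed), a standard symmetrization argument lets me restrict to symmetric test channels parametrized by a single crossover $a=q(1|z=0)=q(0|z=1)$ together with $q(0|z=e)=q(1|z=e)=\tfrac12$. A direct computation then yields $H(\widehat{\bf x})=\log 2$ and $H(\widehat{\bf x}\mid{\bf z})=(1-\delta)h_b(a)+\delta\log 2$, hence $I({\bf z};\widehat{\bf x})=(1-\delta)\bigl(\log 2 - h_b(a)\bigr)$, while the average distortion equals $(1-\delta)\bigl[(1-a)f(0)+af(1)\bigr]+\delta\tfrac{f(0)+f(1)}2$. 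Minimizing $I$ is therefore equivalent to maximizing $h_b(a)$ subject to this quantity being at most $f(D)$; using $f(1)>f(0)$ (as $f$ is increasing) and the fact that $h_b$ increases on $[0,\tfrac12]$, the optimum sets $a$ equal to the largest feasible value $a^\ast=\frac{f(D)-\tfrac{\delta}2 f(1)-f(0)(1-\tfrac{\delta}2)}{(1-\delta)(f(1)-f(0))}$, capped at $\tfrac12$. Substituting $a^\ast$ and taking $[\,\cdot\,]^{+}$ to cover the zero-rate regime $a^\ast\ge\tfrac12$ reproduces exactly \eqref{general_closed_form_bec}, and the stated range of $f(D)$ is precisely the image of $a^\ast\in[0,\tfrac12]$.

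The main obstacle I anticipate is the rigorous justification of the reduction to symmetric test channels and, in particular, the optimality of the uniform assignment $q(\widehat x|z=e)=\tfrac12$. Because the erasure branch does not enter the distortion constraint, the sole effect of $q(\widehat x|z=e)$ is through $I({\bf z};\widehat{\bf x})$, and one must verify---by convexity of $I$ combined with the $0\leftrightarrow1$ symmetry, or by directly checking that the derivative of $I$ in the erasure crossover vanishes at $\tfrac12$---that the uniform choice is the genuine minimizer and not merely a stationary point. Once this and the convexity-based symmetrization are established, the remaining steps (the Bayes inversion, the entropy bookkeeping, and the monotonicity argument for $h_b$) are routine.
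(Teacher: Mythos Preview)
Your proposal is correct, but the route differs from the paper's. The paper computes $p(x|z)$ and $\tilde d(z,\widehat x)$ exactly as you do, and then invokes the parametric/KKT characterization of Corollary~\ref{lemma:implicit_sol}: it plugs the specific $\tilde d$ into the implicit optimal test channel~\eqref{optimal_minimizer}, solves for $p^*(\widehat x|z)$ and $p^*(\widehat x)=\bigl(\tfrac12,\tfrac12\bigr)$ in closed form, and finally substitutes into $I({\bf z};\widehat{\bf x})=H(\widehat{\bf x})-H(\widehat{\bf x}|{\bf z})$. You instead bypass the Lagrangian machinery by a symmetry reduction: convexity of $I$ in $q(\widehat x|z)$ together with invariance of the problem under $0\leftrightarrow 1$ lets you restrict to symmetric test channels parametrized by a single crossover $a$, after which the optimization is a one-line monotonicity argument for $h_b$. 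Your approach is more elementary and self-contained (it does not need Corollary~\ref{lemma:implicit_sol}); the paper's approach is more mechanical and generalizes directly to examples without an obvious symmetry. As for the obstacle you flag, the symmetrization is standard: if $q$ is optimal then so is its $0\leftrightarrow 1$ image, and by convexity their average---which has $q(\widehat x|z=e)=\tfrac12$ and the symmetric crossover---attains no larger mutual information at the same distortion, so the uniform erasure assignment is indeed optimal rather than merely stationary.
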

\paragraph*{Special case} If the chosen $f$-separable distortion measure is {additive} 
(function $f$ corresponds to the identity map), then the closed-form solution of \eqref{general_closed_form_bec} recovers the solution of \cite[Eq. (76)]{Kostina:2016}, which in turn admits the closed-form solution
\begin{align}
\iRDF{f, d}{D}=\left[(1-\delta)\left(\log(2)-h_b\left(\frac{D-\frac{\delta}{2}}{1-\delta}\right)\right)\right]^{+}\label{closed_form_bec_classical}
\end{align}
where $D\in\left[\frac{\delta}{2},\frac{1}{2}\right]$.

\section*{Acknowledgement}
The work of P. A. Stavrou and M. Kountouris has received funding from the European Research Council (ERC) under the European Union’s Horizon 2020 Research and Innovation programme (Grant agreement No. 101003431).
\newpage

\IEEEtriggeratref{11}

\bibliographystyle{IEEEtran}

\bibliography{string,references}

\end{document}